\newtheorem{theorem}{Theorem}[section]
\newtheorem{lemma}[theorem]{Lemma}
\newtheorem{corollary}[theorem]{Corollary}
\theoremstyle{remark}
\theoremstyle{definition}
\newtheorem{definition}{Definition}
\renewcommand{\vec}[1]{\boldsymbol{#1}}
\newcommand{\va}{\vec{a}}
\newcommand{\vb}{\vec{b}}
\newcommand{\ba}{\bar{a}}
\newcommand{\vba}{\vec{\ba}}
\newcommand{\vu}{\vec{u}}
\newcommand{\vx}{\vec{x}}
\newcommand{\bx}{\bar{x}}
\newcommand{\vbx}{\vec{\bx}}
\newcommand{\vy}{\vec{y}}
\newcommand{\eps}{\epsilon}
\newcommand{\poly}{\mathrm{poly}}
\newcommand{\calG}{\mathcal{G}}
\DeclareMathOperator*{\E}{\mathbb{E}}
\title{Well-Supported versus Approximate Nash Equilibria:\\ Query Complexity of Large Games\vspace{0.3cm}}
\author{
  Xi Chen\vspace{0.08cm}\hspace{0.03cm}\thanks{Email: {\tt xichen@cs.columbia.edu}. Supported in part by NSF grants CCF-1149257 and CCF-1423100.}
  \\ Columbia University
\and
  Yu Cheng\vspace{0.08cm}\hspace{0.03cm}\thanks{Email: {\tt yu.cheng.1@usc.edu}. Supported in part by NSF grant CCF-1111270 and Shang-Hua Teng's\newline Simons Investigator Award.}
  \\ University of Southern California
\and
  Bo Tang\vspace{0.08cm}\hspace{0.03cm}\thanks{Email: {\tt tangbonk1@gmail.com}. Supported in part by ERC grant 321171.}~\thanks{This work was done in part while the authors were visiting the Simons Institute.}
  \\ Oxford University
}
\date{}
\begin{document}
\maketitle\vspace{0.3cm}

\begin{abstract}
We study the randomized query complexity of approximate Nash equilibria (ANE)
  in large games.
We prove that, for some constant $\epsilon>0$, any randomized oracle algorithm~that
  computes an~$\epsilon$-ANE in a binary-action, $n$-player
  game 
  must~make $2^{\Omega(n/\log n)}$ payoff queries.
For the stronger solution concept of well-supported
  Nash equili\-bria (WSNE), Babichenko \cite{Babichenko:2014kd} previously gave
  an exponential $2^{\Omega(n)}$ lower~bound~for the randomized query complexity
  of $\epsilon$-WSNE, for some constant $\epsilon>0$;
  the same lower bound was shown~to hold for $\epsilon$-ANE, but
  only when $\epsilon=O(1/n)$.

Our result answers an open problem posed by Hart~and Nisan in \cite{hart2013query} 
  and by Babichenko~in \cite{Babichenko:2014kd},
  and is very close to the trivial upper bound of $2^n$.
Our proof relies on
  a generic reduction from the problem
  of finding an $\epsilon$-WSNE to the problem of finding
  an $\epsilon/(4 \alpha)$-ANE,~in large games with $\alpha$ actions,
  which might be of independent interest.
\end{abstract}


\clearpage

\section{Introduction}
\label{sec:intro}

The celebrated theorem of Nash \cite{NASH50} states that every finite game has an equilibrium point.~The solution concept of Nash equilibrium (NE) has been tremendously influential 
  in economics and social sciences ever since (e.g. see \cite{NashSocial}).
The complexity and efficient approximation of NE have been studied intensively during the past decade, and 
  much progress has~been made 
  (e.g., see~\cite{Lipton,AbbottKaneValiant,BaranyVempalaVetta,KannanTheobald,TSepsilon, 
  DGPJournal, 2Nash, FIXP, Mehta14, dask14,
  BPR15,Chen:2015ez,daskalakis2015structure,Rubinstein:2015kf, DiakonikolasKS15,Barman2015}).

In this paper, we study the randomized query complexity of 
  computing an $\epsilon$-\emph{approximate~Nash equilibrium} 
  (ANE) in large games, for some \emph{constant} $\epsilon>0$. 
Given a game $\calG$ with $n$~players and $\alpha$ actions for each player,  
  we index the players by the set $[n]=\{1,\ldots,n\}$ and
  index the actions by the set $[\alpha]=\{1,\ldots,\alpha\}$.
Recall that an $\epsilon$-ANE of $\calG$ is a mixed strategy profile 
$$
\vx=(\vx_1,\ldots,\vx_n),\quad \text{where $\vx_i\in [0,1]^\alpha$ sums to $1$ for each $i\in [n]$},
$$
in which each player $i$ plays an $\epsilon$-best response $\vx_i$ to other players' strategies
  $\vx_{-i}$\hspace{0.03cm}\footnote{We follow the convention and write
  $\vx_{-i}:=(\vx_1,\ldots,\vx_{i-1},\vx_{i+1},\ldots,\vx_n)$, strategies of players other than $i$ in $\vx$.} 
  (see the precise definition in \autoref{sec:prelim}).
Since the notion of ANE (as well as that of well-supported~Nash equilibria to be discussed below) is additive,
  we always assume that the payoff functions of games considered
  throughout this paper take values between $0$ and $1$.

For the (payoff) query model, an oracle algorithm with unlimited computational power
  is given an approximation parameter $\epsilon$, the number of players $n$ and
  the number of actions~$\alpha$~in an unknown game $\calG$, and needs to 
  find an $\epsilon$-ANE of $\calG$.
The algorithm has oracle access~to the payoff functions of players in $\calG$:
For each round, the algorithm can adaptively query
  a pure~strategy profile~$\va\in [\alpha]^n$,
  and receives the payoff of every player with respect to $\va$.
We~are interested~in the number of queries needed
  by any randomized oracle algorithm for this task.
Note that a trivial upper bound is $\alpha^n$,~by simply querying all the pure strategy profiles.



\subsubsection*{Prior Results and Related Work} 

The query complexity of (approximate) Nash equilibria and related solution concepts  has received considerable attention 
  recently, e.g. see \cite{Fearnley:2013,hart2013query,Fearnley:2014,goldberg2014query, Babichenko:2014kd,BB,
  GoldbergWINE}.~Below we~review results that are most relevant to our work.


The query complexity of (approximate) correlated equilibria\hspace{0.05cm}\footnote{An $\epsilon$-correlated
  equilibrium is a probability distribution over pure strategy profiles, i.e. $[\alpha]^n$,
  such that any player unilaterally deviating from strategies drawn from it can increase her
  expected payoff by no more than $\epsilon$.}
  (CE) is well understood.
For the \mbox{(payoff)} query model considered here,
  randomized algorithms exist (e.g., regret-mini\-mizing \mbox{algorithms} \cite{hart2000simple,hart2005adaptive,blum2007external})
  for finding an $\epsilon$-CE
  using $\poly(1/\epsilon, \alpha, n)$ many queries.
It turns out~that both randomization and approximation are necessary.
Babichenko and Barman in~\cite{BB} first showed that every deterministic algorithm that finds 
  an exact CE requires exponentially many queries in~$n$.
Hart and Nisan~\cite{hart2013query} then showed that the same exponential lower bound holds~for any deterministic algorithm for $(1/2)$-CE and any randomized algorithm for exact CE.
Now for the stronger (expected payoff) query model,
  where the oracle returns the expected payoffs
  of any mixed strategy profile\hspace{0.06cm}\footnote{Such an oracle can be implemented in polynomial time
  for many classes of succinct games; see \cite{papadimitriou2008correlated}.}, 
  Papadimitriou and Roughgarden~\cite{papadimitriou2008correlated} and
  Jiang and Leyton-Brown~\cite{jiang2011correlated} gave a deterministic algorithm
  that computes an exact CE in polynomial time using polynomially many queries (both in $\alpha$ and $n$).

Turning to the harder, but perhaps more interesting, problem
  of approximating Nash equilibria under~the payoff query model,
  the deterministic lower bound of \cite{hart2013query} for $(1/2)$-CE 
  directly implies the~same bound for $(1/2)$-ANE,
  since any $\eps$-ANE~by definition is also an $\eps$-CE.
For the randomized query complexity of NE,
  Babichenko~\cite{Babichenko:2014kd} showed that any randomized algorithm
  requires $2^{\Omega(n)}$ queries to find an $\epsilon$-well-supported Nash equilibrium
  (WSNE), in a binary-action, $n$-player game (see \autoref{thm:bab14}).
Recall that an $\epsilon$-WSNE of a game is a mixed strategy profile $\vx$
  in which the probability of player $i$ playing action $j$ is positive
  only when action $j$ is an $\epsilon$-best response with respect to $\vx_{-i}$ 
  (see \autoref{sec:prelim} for the precise definition).
By definition, an $\epsilon$-WSNE 
  is also an $\epsilon$-ANE but the inverse is not true. 
Following a well-known connection between WSNE and ANE \cite{DGPJournal}
  (and using random samples to approximate
  expected payoffs), Babichenko \cite{Babichenko:2014kd}~showed that~the same $2^{\Omega(n)}$
  bound holds for the randomized query complexity of $\epsilon$-ANE,
  but only when~$\epsilon=O(1/n)$. 
The randomized query complexity of $\epsilon$-ANE in large games,
  an arguably more natural relaxation~of exact NE compared to WSNE, 
  remains an open problem when $\epsilon>0$ is a constant. 

\subsubsection*{Our Results}

For binary-action, $n$-player games,
  we show that $2^{\Omega(n/\log n)}$ queries are required for any randomized 
  algorithm to find an $\epsilon$-ANE, for some constant $\epsilon>0$.
To state the result, we use $\text{\textsc{QC}}_p(\textbf{ANE}(n,\epsilon))$,  for 
  some $p>0$,  to denote the smallest 
  $T$ such that there exists a randomized oracle algorithm~that uses no more than $T$ queries
  and outputs
  an $\epsilon$-ANE with probability at least $p$,
  given any unknown binary-action, $n$-player 
  game. 
Our main result is the following lower bound on  $\text{\textsc{QC}}_p(\textbf{ANE}(n,\epsilon))$:
 
\begin{flushleft}\begin{theorem}[Main]\label{thm:main}
  There exist two constants $\epsilon>0$ and $c>0$ such that 
  $$\text{\textsc{QC}}_p(\textbf{\emph{ANE}}(n,\epsilon))=2^{\Omega(n/\log
    n)},\quad\text{where $p=2^{- c n/\log{n}}$.}$$
\end{theorem}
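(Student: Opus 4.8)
The plan is to transfer the exponential lower bound for well-supported equilibria (the $2^{\Omega(N)}$ bound for $\eps$-$\WSNE$ in binary-action $N$-player games described above and in \autoref{thm:bab14}) to approximate equilibria. The only reason an $\eps$-$\ANE$ is easier to find than an $\eps$-$\WSNE$ is that an $\ANE$ may place up to $\eps$ probability mass on badly sub-optimal actions as long as its total regret stays below $\eps$; the DGP-style connection removes this slack but pays a factor of $\Theta(n)$ in the parameter, which is exactly why the $\ANE$ bound was previously known only for $\eps=O(1/n)$. The heart of the argument is therefore a sharper reduction whose parameter loss scales with the number of actions $\alpha$ rather than with the number of players $n$: I would show that finding an $\eps$-$\WSNE$ can be reduced to finding an $(\eps/4\alpha)$-$\ANE$ in a game with $\alpha$ actions. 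Composed with \autoref{thm:bab14}, and using that the $\WSNE$ lower bound survives padding the action set to any constant $\alpha\ge 2$, this yields a $2^{\Omega(N)}$ lower bound for constant-parameter $\ANE$ in games with a \emph{constant} number of actions.

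Designing this reduction is the main content. The naive idea is a solution-level ``clean-up'': given a candidate $(\eps/4\alpha)$-$\ANE$ $\vx$, bound for each player $i$ the probability that $\vx_i$ places on actions failing to be $\eps$-best responses to $\vx_{-i}$ (each such action contributes at least $\eps$ to the regret, and the total regret is at most $\eps/4\alpha$, so this mass is at most $1/(4\alpha)$), then move that mass onto $\eps$-best responses and renormalize. The obstacle is that this is \emph{not} a one-shot operation: moving mass for player $i$ perturbs the payoffs, and hence the $\eps$-best-response sets, of the other players, so the cleaned-up profile need not be a genuine $\WSNE$. The role of the transformation is to control this coupling — I expect it to expand each player's action set with a gadget so that a profile of total regret $\eps/4\alpha$ is forced to be (almost) a pure best response action-by-action, with the factor $4\alpha$ precisely absorbing the cumulative perturbation across the $\alpha$ actions so that the $\eps$-best-response sets do not shift. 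I would also keep the reduction query-efficient, so that answering one payoff query to the transformed game costs only $\poly$ many queries to the original $\calG$; this is what lets both the $2^{\Omega(N)}$ query bound and the exponentially small success probability $2^{-\Omega(N)}$ of \autoref{thm:bab14} pass through essentially intact.

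Finally, to land on \emph{binary}-action games as in the statement, I would binarize the constant-action hard instance, replacing each player (with $\alpha=O(1)$ actions) by a block of binary-action players whose joint pure strategy encodes the chosen action. The subtlety here — and what I expect to be the real source of the $\log n$ loss — is that a binary $\ANE$ only forbids \emph{single-bit} deviations, whereas a deviation of the original player flips several bits at once, so a naive encoding admits spurious binary equilibria that do not project back to an $\ANE$ of the constant-action game. To rule these out robustly and simultaneously for all $N$ players, each block's simulation must fail with probability only $1/\poly(N)$ so that a union bound over the $N$ players goes through, which forces $\Theta(\log N)$ binary players per original player and hence $n=\Theta(N\log N)$. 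Tracking parameters through this step, the constant $\ANE$ parameter $\eps$ is preserved while $N=\Theta(n/\log n)$, so \autoref{thm:bab14} gives $2^{\Omega(N)}=2^{\Omega(n/\log n)}$ queries against success probability $2^{-\Omega(N)}=2^{-cn/\log n}$, as claimed. The two places I would scrutinize most carefully are the stability of the $\eps$-best-response sets under the clean-up (the origin of the $4\alpha$ factor) and the faithfulness of the binary encoding against multi-bit deviations (the origin of the $\log n$ loss).
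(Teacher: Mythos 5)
There is a genuine gap at the heart of your plan: you correctly identify the target (a reduction from $\epsilon$-WSNE to $\epsilon/(4\alpha)$-ANE whose loss depends on $\alpha$ rather than $n$) and the central obstacle (cleaning up an ANE moves a constant $1/(4\alpha)$ of each player's mass, and these perturbations accumulate over all $n$ players, destroying the other players' best-response sets), but the mechanism you propose to overcome it does not work. You suggest ``expanding each player's action set with a gadget'' so that ``the factor $4\alpha$ absorbs the cumulative perturbation across the $\alpha$ actions'' --- but the perturbation accumulates across the $n-1$ \emph{other players}, not across the $\alpha$ actions, so no constant factor depending only on $\alpha$ can absorb it. The paper's actual mechanism blows up the number of \emph{players}, not actions: each player $i$ of $\calG$ is simulated by a group of $s=2\alpha^2\lceil\ln(n/\epsilon)\rceil$ players with the same action set, each of whose payoffs is $u_i$ evaluated at her own action against the \emph{majority} action of every other group. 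In an $\epsilon$-ANE of this game, any action that is $2\alpha\epsilon$-suboptimal against the majority profile gets at most $1/(2\alpha)$ mass from each group member, so by Hoeffding it is the majority of the $s$ independent members with probability at most $e^{-s/(2\alpha^2)}\le \epsilon/n$. Truncating these tiny entries therefore changes each group's effective (majority) distribution by at most $\alpha\epsilon/n$ in total variation, the cumulative effect over all $n$ groups is at most $\alpha\epsilon$, and the truncated profile is a $4\alpha\epsilon$-WSNE of $\calG$. This concentration-of-the-majority step is the missing idea, and it is also the true source of the $\log n$ loss in the exponent.

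Your final binarization step is both unnecessary and misplaced. Babichenko's hard WSNE instance is already binary-action, and the group/majority construction preserves the action count, so one applies it directly with $\alpha=2$ to get a binary-action ANE-hard instance with $n=\Theta(N\log N)$ players; there is no need to pad to $\alpha$ actions and then re-encode actions in bits (which, as you note, would create its own multi-bit-deviation problems). To your credit, your quantitative intuition for the $\log$ factor --- each simulated player must ``fail'' with probability only $O(1/\mathrm{poly}(N))$ so that a union bound over the $N$ players goes through, forcing $\Theta(\log N)$ copies per player --- is exactly the right accounting; it just belongs to the player-replication step of the WSNE-to-ANE reduction rather than to a binary encoding.
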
\end{flushleft}

Our lower bound answers an open problem posed by Hart and Nisan \cite{hart2013query} 
  and by Babichenko \cite{Babichenko:2014kd}.
Our result shows that, in terms of their query complexities, finding
  an $\epsilon$-ANE is almost as hard as 
  finding an $\epsilon$-WSNE in a large game, even for constant $\epsilon>0$. 
It also implies the following corollary regarding the
  rate of convergence of $k$-queries dynamics (see \cite{Babichenko:2014kd} for
  the definition).
  
\begin{flushleft}\begin{corollary}
    There exist two constants $\epsilon>0$ and $c>0$ such that no
    $k$-queries dynamic can converge to an $\epsilon$-ANE in $\smash{2^{\Omega(n/\log n)}/k}$ steps with
    probability at least $2^{- {c n}/{\log n}}$ in all
    binary-action and $n$-player games.
\end{corollary}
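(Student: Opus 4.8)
The plan is to obtain the corollary as a direct consequence of \autoref{thm:main}, by observing that any $k$-queries dynamic is, after $T$ steps, a randomized oracle algorithm making at most $kT$ payoff queries. I would take $\epsilon>0$ and $c>0$ to be exactly the two constants furnished by \autoref{thm:main}, so that with $p=2^{-cn/\log n}$ we have $\text{\textsc{QC}}_p(\textbf{ANE}(n,\epsilon))\ge 2^{c'n/\log n}$ for some constant $c'>0$. These are the constants I claim also witness the corollary.

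Recall that a $k$-queries dynamic proceeds in discrete steps, and in each step it adaptively queries at most $k$ pure strategy profiles $\va\in\{0,1\}^n$, receives the payoff of every player at each queried profile, and updates its current mixed strategy profile as a (possibly randomized) function of the transcript of answers seen so far. In particular, the profile it holds after any number of steps is determined by the sequence of payoff responses---exactly the information available to a randomized oracle algorithm of the type counted by $\text{\textsc{QC}}_p$.

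I would argue by contradiction. Suppose some $k$-queries dynamic converges to an $\epsilon$-ANE within $T:=2^{c''n/\log n}/k$ steps with probability at least $p=2^{-cn/\log n}$, where I fix the exponent constant $c''$ to be strictly smaller than $c'$. Interpreting ``convergence in $T$ steps'' as the event that the profile held at step $T$ is an $\epsilon$-ANE, I build a randomized oracle algorithm that simulates the dynamic for exactly $T$ steps and outputs that profile. This algorithm makes at most $kT=2^{c''n/\log n}$ queries and outputs an $\epsilon$-ANE with probability at least $p$. Since $c''<c'$, we have $kT<2^{c'n/\log n}\le\text{\textsc{QC}}_p(\textbf{ANE}(n,\epsilon))$, contradicting \autoref{thm:main}. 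Hence no such dynamic exists, which is the assertion of the corollary with $2^{\Omega(n/\log n)}=2^{c''n/\log n}$.

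The one point requiring care---and the only (minor) obstacle---is verifying that the query model of $k$-queries dynamics coincides with the payoff-query model underlying $\text{\textsc{QC}}_p$: in both a query is a pure profile, the answer is the vector of all players' payoffs, and the continuation is an arbitrary randomized function of the transcript. This is immediate from the definition of $k$-queries dynamics in \cite{Babichenko:2014kd}, and once confirmed the factor $k$ is simply absorbed into the query count with no loss in the exponent (steps in which fewer than $k$ queries are made only decrease the total, so the bound $kT$ is safe).
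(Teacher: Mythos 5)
Your proposal is correct and is exactly the argument the paper intends: the paper states the corollary as a direct consequence of \autoref{thm:main} without writing out a proof, and the intended justification is precisely your observation that a $k$-queries dynamic run for $T$ steps can be simulated by a randomized oracle algorithm making at most $kT$ payoff queries, so the query-complexity lower bound transfers to a lower bound on the number of steps divided by $k$.
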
\end{flushleft}

In addition to the randomized query complexity, our proof of \thref{thm:main}
  yields a polynomial-time reduction\hspace{0.02cm}\footnote{Recall that a 
  polynomial-time reduction from total search problem $A$ to total search problem $B$
  is a pair $(f,g)$~of polynomial-time computable functions such that:
  1) for every input instance $x$ of $A$, $f(x)$ is an input instance of $B$;  
  2) for every solution $y$ to $f(x)$ in $B$, $g(y)$ is a solution to $x$ in $A$.} from the problem of finding an
  $\epsilon$-WSNE to that of finding an $(\epsilon'=\Omega(\epsilon))$-ANE
  in a \emph{succinct game} with a fixed number of actions.
Following the definition from
  \cite{papadimitriou2008correlated}, we say that~an $\alpha$-action succinct game is a pair $(n,U)$, where $n$
  is the number of players and
  $U$ is a (multi-output) Boolean circuit 
  that, given a pure strategy profile $\va\in [\alpha]^n$ (encoded in binary), outputs
  the payoffs of all $n$ players with respect to $\va$ in the game.
We show that

\begin{flushleft}
\begin{theorem}\label{thm:succinct}
Let $\epsilon\ge 0$ and $\alpha\in \mathbb{N}$ be two constants.
Then the problem of finding an $\epsilon$-WSNE 
  is polynomial-time reducible to that of finding an $\epsilon/(4\alpha)$-ANE, both in an $\alpha$-action succinct game.
\end{theorem}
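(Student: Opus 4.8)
The plan is to exhibit the reduction as a pair $(f,g)$ in which $f$ builds, from the input $\alpha$-action game $\calG$, a new $\alpha$-action game $\calG'$ whose payoffs sharpen the penalty for playing far-from-optimal actions, and $g$ post-processes an $\eps/(4\alpha)$-ANE of $\calG'$ into an $\eps$-WSNE of $\calG$ by pruning and renormalizing. The starting point is the elementary identity that a mixed profile $\vx$ is a $\delta$-ANE iff, for every player $i$, the average regret $\sum_{j}x_{i,j}R_i(j,\vx_{-i})\le\delta$, where $R_i(j,\vx_{-i})=\max_k u_i(k,\vx_{-i})-u_i(j,\vx_{-i})\ge 0$ is the regret of action $j$; a $\delta$-WSNE instead demands the pointwise bound $R_i(j,\vx_{-i})\le\delta$ for every $j$ in the support of $\vx_i$. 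The whole difficulty is to convert an average guarantee into a pointwise one while losing only a factor of $4\alpha$, and in particular without incurring any dependence on $n$.

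First I would specify $f$. The map $f$ must be a \emph{local} modification of $\calG$: each payoff entry of $\calG'$ should be computable from only $O(\alpha)$ payoff entries of $\calG$. This keeps $f$ polynomial-time and, crucially for the lower-bound application, lets a single query to $\calG'$ be simulated by $O(\alpha)$ queries to $\calG$. The role of $f$ is amplification: I would design the payoffs of $\calG'$ so that in any approximate equilibrium a player is strongly discouraged from placing probability on an action whose regret in $\calG$ exceeds $\eps$, so that an $\eps/(4\alpha)$-ANE of $\calG'$ is forced to put only a small total mass on such ``bad'' actions. Note that plain averaging in $\calG$ itself would only bound this bad mass by a constant (roughly $1/\alpha$), which is why the sharpening step is needed; combined with the averaging identity, a Markov-type argument over the $\alpha$ actions then bounds the bad mass, and this is exactly the step where the factor $\alpha$ (and the constant $4$, from choosing the pruning threshold $\eps/2$ against the target $\eps$) enters.

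Next I would define $g$: given an $\eps/(4\alpha)$-ANE $\vx$ of $\calG'$, for each player $i$ delete every action $j$ with $R_i(j,\vx_{-i})>\eps/2$ (measured in $\calG$), renormalize to obtain $\bar{\vx}_i$, and output $\bar{\vx}$. By the previous step the deleted mass is small, so $\bar{\vx}$ is close to $\vx$ in total variation per player. It then remains to verify that $\bar{\vx}$ is an $\eps$-WSNE of $\calG$, i.e. that every surviving action still has regret at most $\eps$ measured against the \emph{pruned} opponents' profile $\bar{\vx}_{-i}$, rather than against the original $\vx_{-i}$ at which the threshold was applied.

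The main obstacle is precisely this last verification, because pruning is performed simultaneously for all $n$ players, and the naive bound on how much one player's regret moves when all opponents shift their strategies degrades by a factor of $n$; this is the same cross-player coupling that forces the quadratic loss in the standard WSNE-from-ANE argument. The crux of the construction is therefore to make $\calG'$ robust to this simultaneous perturbation, so that the change in every player's regret stays $O(\eps)$ independent of $n$. Concretely, I would ensure through the design of $f$ (at the cost of enlarging the player set, which is what ultimately accounts for the extra logarithmic factor in the main theorem) that each player's influence on any other player's payoff is small enough that the aggregate effect of pruning all opponents remains within the slack between the $\eps/2$ threshold and the target $\eps$. Once this robustness is in place, each surviving action retains regret at most $\eps/2$ plus a perturbation term kept below $\eps/2$, hence at most $\eps$ under $\bar{\vx}_{-i}$; every action in the support of $\bar{\vx}_i$ is then an $\eps$-best response, so $\bar{\vx}=g(\vx)$ is the desired $\eps$-WSNE and the reduction is complete.
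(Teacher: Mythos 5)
You have correctly identified the shape of the reduction (a pair $(f,g)$ with $g$ pruning and renormalizing an approximate equilibrium of a modified game) and the central obstacle (converting an average-regret guarantee into a pointwise one without paying a factor of $n$ when all players are pruned simultaneously). But the proposal never actually constructs $\calG'$: at both places where the argument must do real work --- ``I would design the payoffs of $\calG'$ so that \dots\ a player is strongly discouraged from placing probability on an action whose regret exceeds $\eps$'' and ``I would ensure through the design of $f$ \dots\ that each player's influence on any other player's payoff is small enough'' --- the construction is asserted rather than given, and the construction is the entire content of the theorem. Moreover, the version of $g$ you do specify (prune actions whose regret exceeds $\eps/2$, then renormalize) does not close the gap on its own: in an $\eps/(4\alpha)$-ANE, Markov's inequality only bounds the mass on actions of regret greater than $\eps/2$ by $1/(2\alpha)$, a constant, so the per-player total-variation perturbation caused by pruning is $\Omega(1/\alpha)$ and its cumulative effect over $n-1$ opponents is not $O(\eps)$. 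This is exactly the failure mode you flag, and nothing in the proposal overcomes it.

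The missing idea in the paper is a specific amplification mechanism: each player $i$ of $\calG$ is simulated by a group of $s=2\alpha^2\lceil\ln(n/\eps)\rceil$ players of $\calG'$, each of whose payoffs equals $u_i(\cdot,\vba_{-i})$, where $\ba_{i'}$ is the \emph{majority} action of group $i'$. If some action $\ell$ has regret more than $2\alpha\eps$ for group $i$ (measured against the majority distributions $\vbx_{-i}$), the ANE condition forces every member of the group to put at most $1/(2\alpha)$ probability on $\ell$; since the $s$ members randomize independently, Hoeffding's inequality then drives the probability that $\ell$ is the group's majority action down to $e^{-s/(2\alpha^2)}\le\eps/n$. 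One therefore prunes by a \emph{probability} threshold of $\eps/n$ (not by a regret threshold), so each group's majority distribution moves by at most $\alpha\eps/n$ in total variation and the cumulative perturbation over all $n-1$ other groups stays below $\alpha\eps$ --- with no appeal to any Lipschitz or small-influence property of the payoffs. Without the group/majority construction and this concentration step, the factor-$n$ loss you identify remains, so the proposal as written has a genuine gap.
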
\end{flushleft}

\subsubsection*{Approximate vs Well-Supported Nash Equilibria}

Let $\text{\textsc{QC}}_p(\textbf{WSNE}(n,\epsilon))$,  for 
  some $p>0$, denote the smallest 
  $T$ such that there exists a randomized oracle algorithm that uses no more than $T$ queries
  and outputs
  an $\epsilon$-WSNE with probability at~least $p$,
  given any unknown binary-action, $n$-player game.
Babichenko showed that
\begin{theorem}[\cite{Babichenko:2014kd}]
\label{thm:bab14}
  There exist two constants $\eps>0$ and $c>0$ such that
  $$\text{\textsc{QC}}_p(\textbf{\emph{WSNE}}(n,\epsilon))=2^{\Omega(n)},\quad\text{where 
  $p=2^{-c n}$.}$$  
\end{theorem}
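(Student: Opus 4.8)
The plan is to combine Yao's minimax principle with a ``hidden fixed point'' hard distribution. It suffices to exhibit a distribution $\mathcal{D}$ over binary-action, $n$-player games on which every \emph{deterministic} algorithm making fewer than $2^{\Omega(n)}$ payoff queries outputs an $\epsilon$-WSNE with probability at most $2^{-c'n}$ for some constant $c'>0$. By Yao's principle, the worst-case success probability of any randomized algorithm with the same query budget is bounded by this distributional quantity, so choosing the target probability $p=2^{-cn}$ with $c<c'$—which exceeds the ceiling $2^{-c'n}$—forces any randomized algorithm achieving it to make $2^{\Omega(n)}$ queries. Throughout I identify a mixed profile with the point $\vx\in[0,1]^n$ whose $i$-th coordinate is the probability that player $i$ plays action $1$, so that the strategy space is the cube $[0,1]^n$ and the whole construction reduces to a hard instance of approximate Brouwer fixed points on the cube.

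First I would construct a map $f:[0,1]^n\to[0,1]^n$ with a \emph{constant} Lipschitz constant whose only approximate fixed points are ``hidden.'' Following the Hirsch--Papadimitriou--Vavasis paradigm, I take a self-avoiding path $P=(v_0,\dots,v_L)$ through a sub-grid of the cube, selected according to a random secret, and define $f$ so that off the path it exhibits a fixed default drift (and hence has no fixed point), while along the path its displacement ``flows forward,'' vanishing only in a small neighborhood of the endpoint $v_L$. The endpoint is therefore the unique location of an approximate fixed point. The two competing requirements here are that the tube around $P$ be geometrically thin—so that it occupies an exponentially small fraction of the cube and is essentially invisible to a small number of queries—yet that $f$ remain $O(1)$-Lipschitz, which is exactly what lets the correspondence below hold for a constant $\epsilon$ rather than $\epsilon=O(1/n)$. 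Reconciling these is the main obstacle, and is precisely where the careful continuous embedding of the combinatorial line is needed.

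Next I would embed $f$ into a binary-action game so that $\epsilon$-WSNE correspond to approximate fixed points of $f$. Using the standard gadget, player $i$'s payoffs are arranged so that action $1$ is a best response precisely when the drift $f_i(\vx)-x_i$ is positive and action $0$ is a best response when it is negative. In an $\epsilon$-WSNE the per-action support condition forces every coordinate $i$ with $x_i\in(0,1)$ to satisfy $|f_i(\vx)-x_i|\le O(\epsilon)$, so $\vx$ is an $O(\epsilon)$-approximate fixed point and hence lies in the exponentially small neighborhood of the hidden endpoint $v_L$. This is exactly where WSNE is more convenient than ANE: its coordinatewise best-response requirement translates directly into the fixed-point condition, whereas ANE controls only an average and is the source of the $\epsilon/(4\alpha)$ loss in \autoref{thm:succinct}. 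Crucially, a single payoff query to the game at a pure profile $\va\in\{0,1\}^n$ can be answered from $O(1)$ evaluations of $f$, so a $T$-query algorithm for $\epsilon$-WSNE yields a $T\cdot O(1)$-query procedure whose output lands near $v_L$.

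Finally I would prove the query lower bound against $\mathcal{D}$ by a posterior/adversary argument. Since $P$ occupies an exponentially small fraction of the cube, any fixed set of fewer than $2^{\Omega(n)}$ queries will, with probability $1-2^{-\Omega(n)}$ over the random secret, touch only the default region and thus return values independent of the hidden endpoint; conditioned on such a transcript the posterior over the endpoint location remains spread over an exponentially large set, so the algorithm's single output falls within the required neighborhood of $v_L$ with probability at most $2^{-\Omega(n)}$. Combining this with the query-preserving reduction of the previous paragraph and invoking Yao's principle with $p=2^{-cn}$ completes the proof. As noted, the only genuinely delicate point is the simultaneous thinness and $O(1)$-Lipschitzness of $f$; everything else is bookkeeping around the gadget together with a union bound over the queries.
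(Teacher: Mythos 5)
First, a point of framing: the paper does not prove \autoref{thm:bab14} at all. It is Babichenko's theorem, imported verbatim from \cite{Babichenko:2014kd} and used as a black box; the paper's own contribution is the reduction in \autoref{sec:reduc}, which converts this WSNE lower bound into the ANE lower bound of \autoref{thm:main}. So the only meaningful comparison is with Babichenko's proof in the cited work. Your outline does share its broad architecture --- Yao's minimax principle, a hidden-path (Hirsch--Papadimitriou--Vavasis style) hard distribution, a translation between equilibria of binary-action games and approximate fixed points, and a posterior argument that a sub-exponential number of queries cannot locate the end of the path.

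That said, there are two genuine gaps, and they sit exactly at the points where the theorem is hard. The first you flag yourself: constructing an exponentially long hidden path whose tube is exponentially thin while $f$ stays $O(1)$-Lipschitz, at \emph{constant} precision $\eps$, in dimension $n$, and against randomized rather than deterministic algorithms. This is not ``bookkeeping''; it is the main content of \cite{Babichenko:2014kd}, and a proposal that defers its self-described ``main obstacle'' has not proved the theorem. The second gap you do not flag, and it is fatal as written: the gadget you describe cannot exist. In a binary-action game, player $i$'s set of best responses is determined by $\vx_{-i}$ alone (expected payoffs never depend on a player's own mixed strategy), so no choice of payoffs can make ``action $1$ is a best response precisely when $f_i(\vx)-x_i>0$,'' since that condition depends on $x_i$. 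What a binary-action game can encode is a sign/threshold condition: in an $\eps$-WSNE, $x_i=1$ whenever the expected payoff gap $g_i(\vx_{-i})$ exceeds $\eps$, $x_i=0$ whenever it is below $-\eps$, and $x_i$ is unconstrained when $|g_i(\vx_{-i})|\le \eps$. Moreover, $g_i$ is necessarily the multilinear extension of values queried at pure profiles, whereas your $f$ is an arbitrary Lipschitz function on the solid cube: for a mixed profile $\vx$, $\E_{\va\sim\vx}[f_i(\va)]$ and $f_i(\vx)$ can differ by a constant (a random vertex $\va$ sits at $\ell_\infty$-distance $\Omega(1)$ from $\vx$), so your conclusion that every $\eps$-WSNE is an $O(\eps)$-approximate fixed point of $f$ does not follow. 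Designing pure-profile payoffs whose induced multilinear drift field still funnels every WSNE toward the hidden endpoint --- at constant $\eps$, which is precisely the regime where the naive correspondence breaks --- is the technical heart of Babichenko's argument, and your proposal replaces it with the phrase ``standard gadget.''
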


Given the result of Babichenko as above, the same exponential lower bound
  for the randomized query \mbox{complexity} of $\epsilon$-ANE,
  for small enough constant $\epsilon>0$, would follow immediately if 
\begin{flushleft}
\begin{enumerate} 
\item[] \emph{Given oracle access to $\calG$ and any $\epsilon'$-ANE of $\calG$, where
  $\epsilon'=c(\alpha)\cdot \epsilon$ for some constant $c>0$ that only depends on $\alpha$, 
  there is a query-efficient procedure that outputs an $\epsilon$-WSNE of $\calG$.}
\end{enumerate}
\end{flushleft}
However, the best such procedure known is the following result from \cite{DGPJournal}.
The parameters~are subsequently improved in \cite{Babichenko:2014kd},
  where the number of queries needed is also analyzed:
\begin{flushleft}\begin{quote}
\emph{Given oracle access to $\calG$ and any $\eps^2/(16n)$-ANE 
  of $\calG$, there is a procedure that outputs 
  an $\epsilon$-WSNE of $\calG$, where $n$ is the number of players,
  using $\text{poly}(\alpha,n,1/\epsilon)$ payoff queries.}
\end{quote}\end{flushleft}
The procedure is very natural: For each player, reallocate probabilities on
  actions with a relatively low expected payoff to a best-response action.
By \autoref{thm:bab14}, such a procedure implies~the~same exponential lower bound for $\epsilon$-ANE
  \cite{Babichenko:2014kd} but only when $\epsilon$ is $O(1/n)$.

No better procedure is known. By definition, an ANE
  poses a slightly weaker condition on~each player compared to that of a WSNE.
More specifically, given mixed strategies of other players $\vx_{-i}$,
  for an $\epsilon$-WSNE,
  $\vx_i$ must be supported on actions that are $\epsilon$-best responses to $\vx_{-i}$,
  while in an $\epsilon$-ANE, $\vx_i$ can be any
  mixed strategy that yields an overall $\epsilon$-best response to $\vx_{-i}$.
For example, $\vx_i$ may put $1-\epsilon$ probability on
  best-response actions while putting $\epsilon$ probability on any other actions.
On the one hand, this makes WSNE much easier to analyze
  and control in hardness reductions,
  which is why it played 
  a critical role in characterizing the complexity of Nash equilibria,
  starting with the work of \cite{DGPJournal}, later in \cite{2Nash} and subsequent works.
On~the~other~hand,~as the $\epsilon$ being of interest in \cite{DGPJournal,2Nash}
  is either exponentially or polynomially small, any hardness result for $\epsilon$-WSNE yields the same result for $\epsilon$-ANE (by combining the
  procedure of \cite{DGPJournal} described above and a folklore padding argument).



\subsubsection*{Our Approach}

While we were not able to improve the procedure of \cite{DGPJournal,Babichenko:2014kd},
  we prove \thref{thm:main}~via a \emph{query-efficient} reduction from
  the problem of finding a WSNE to that of 
  finding a ANE:
\begin{flushleft}\begin{quote}
\emph{Given any $\alpha$-action, $n$-player game $\calG$ and any parameter $\epsilon>0$,
  one can define a new $\alpha$-action game $\calG'$ with 
  a slightly larger set of $O(\alpha^2\log(n/\epsilon)\cdot n)$ players 
  such that}
\begin{flushleft}\begin{enumerate}
\item \emph{To answer each payoff query on $\calG'$, it suffices to make $\alpha n$
  payoff queries on $\calG$;}
\item \emph{There is a procedure that,  
given any $\epsilon$-ANE $\vx$ of $\calG'$, outputs
  a $(4\alpha\epsilon)$-WSNE $\vy$\\ of $\calG$, with no 
  payoff oracle access to $\calG$ or $\calG'$.}
\end{enumerate}\end{flushleft}\end{quote}\end{flushleft}
Our reduction is presented in \autoref{sec:reduc}.
\thref{thm:main} then follows directly from the lower bound~of Babichenko \cite{Babichenko:2014kd}
  on the randomized query complexity of WSNE (in \autoref{thm:bab14}).
\thref{thm:succinct} follows from the fact that:\hspace{-0.07cm} 1) the payoff entries of 
  $\calG'$ are easy to compute; and 2) the procedure to obtain
  $\vy$ from $\vx$ runs in  time polynomial in the length of
 the binary representation of $\vx$, when the number
  of actions $\alpha$ is bounded.





Recall that in the procedure of
\cite{DGPJournal,Babichenko:2014kd}, an $\epsilon$-WSNE is obtained 
  from an $\epsilon'$-ANE with $\epsilon'=\epsilon^2/(16n)$ 
  by reallocating probabilities on actions with
  relatively low expected payoff
  (formally, actions with payoff $\Omega(\epsilon)$ lower than the best response)
  to best-response actions.
From~the definition of ANE, no player can have probability more than
  $O(\eps'/\eps)=O(\eps/n)$ on actions with low payoff~in an $\epsilon'$-ANE.
Thus, the procedure changes the expected payoff of each player on each action
  by at most $n\cdot O(\eps/n)=O(\eps)$ 
  since it changes the mixed strategy of each player by $O(\eps/n)$.
  It follows~that the new mixed strategy profile
  is an $\epsilon$-WSNE.
The blow up of a factor of $n$
  from $\epsilon'$ to $\epsilon$ is precisely due to the cumulative
  impact on a player's expected payoff
  imposed by small changes~to all other players' mixed strategies.

Our reduction from WSNE to ANE overcomes this obstacle by constructing a new and
slightly larger game $\calG'$ with  $O(n\log{n})$ players, where each player $i$ in
the original $n$-player game $\calG$ is now simulated by a group of $O(\log{n})$
  players indexed by $(i,j)$ in the new game $\calG'$. 
The payoff function of player~$(i,j)$ in $\calG'$ 
  is exactly the same as that of player $i$ in $\calG$, but is defined
  with respect to the \emph{aggregate}~action of each group of players in $\calG'$
  by taking the \emph{majority} among each group.

We then show that an $\epsilon$-WSNE
  of $\calG$ can be recovered from an $\epsilon'$-ANE of $\calG'$,
  with $\epsilon'=\Omega(\epsilon)$,~by
  1) computing the distribution of the majority action of each group and
  2) truncating the small entries in each distribution.
Intuitively, by focusing on the aggregate behavior of each group
  of~$O(\log n)$ \emph{independent} players in $\calG'$,
  we make sure that the mixed strategies obtained from Step 1)
  are highly concentrated on actions with close-to-best expected payoffs,
  and actions with low payoffs can only appear as the majority action of a group
  with probability $O(\eps/n)$.
Therefore in Step 2), we only need to truncate entries with probability $O(\eps/n)$,
  and the remaining positive entries would correspond to close-to-best actions.
We can also control the effect of this truncation at the same time,
  because when the number of actions are bounded,
  the aggregate behavior of each group changes by at most $O(\eps / n)$.
This allows us to show that the result is an $\eps$-WSNE of the original game $\calG$.




\subsubsection*{Organization}
The rest of the paper is organized as follows.
We first give formal definitions of ANE and WSNE~in \autoref{sec:prelim}.
In \autoref{sec:reduc} we present the reduction from WSNE to ANE for large games,
  and then~use it to prove \autoref{thm:main} and \autoref{thm:succinct} in \autoref{sec:succinct}.
We conclude and discuss open problems in \autoref{sec:conclude}.



\section{Preliminaries}\label{sec:prelim}
A game $\calG$ is a triple $(n,\alpha,\vu)$, where $n$ is the number of players,
  $\alpha$ is the number of actions~for~each player,
  and $\vu=(u_1,\dots, u_n)$ are the payoff functions, one for each player.
We always use $[n]=\{1,\ldots,n\}$ to denote the set~of players and $[\alpha]=\{1,\ldots,\alpha\}$ to 
  denote the set of actions for each player.
Since we are interested in additive approximations, each $u_i$
  maps $[\alpha]^n$ to $[0,1]$.

Let $\Delta_\alpha$ denote the set of probability distributions over $[\alpha]$.
A mixed strategy profile of $\calG$ is then a tuple $\vx=(\vx_1,\dots,\vx_n)$ of
  mixed strategies, where $\vx_i \in \Delta_\alpha$ denotes the mixed strategy of player $i$.
Given $\vx$, we use $\vx_{-i}$ to denote the tuple of mixed strategies of all players other than $i$.
As a shorthand, we write $u_i(\vx)$ to denote the expected payoff of player $i$
  with respect to $\vx$, and write $u_i(a,\vx_{-i})$ to denote the expected payoff of player $i$
  playing action $a\in \big[\alpha\big]$ with respect to $\vx_{-i}$:
$$u_i(\vx)=\E_{\va \sim\vx }[u_i(\va)]\quad\text{and}\quad
u_i(a,\vx)=\E_{\vb \sim \vx_{-i}}\big[u_i(a,\vb)\big].
$$
  

Next we define approximate and well-supported Nash equilibria.

\begin{definition}
Given $\eps>0$, an $\epsilon$-approximate Nash equilibrium of an $\alpha$-action and
  $n$-player game $\calG(n,\alpha,\vu)$
  is a mixed strategy profile $\vx=(\vx_1,\dots,\vx_n)$ such that for every player $i\in [n]$:
  \[ u_i(\vx)\ge u_i(a',\vx_{-i})-\eps,\quad\text{for all $a'\in [\alpha]$.}\vspace{0.1cm} \]
\end{definition}

\begin{definition}
Given $\eps>0$, 
  an $\epsilon$-well-supported Nash equilibrium of $\calG(n,\alpha,\vu)$ is 
  a mixed strategy profile $\vx=(\vx_1,\dots,\vx_n)$ such that
  for every player $i\in [n]$ and every action $a$ in the support of $\vx_i$:
  \[ u_i(a, \vx_{-i})\ge u_i(a',\vx_{-i})-\eps,\quad\text{for all $a'\in [\alpha]$.}\vspace{0.1cm}\]
\end{definition}

Finally, we give a formal definition of succinct games \cite{papadimitriou2008correlated}. 

\begin{definition}
An $\alpha$-action succinct game is a pair $(n,U)$, where $n$
  is the number of players and~$U$ is a (multi-output) Boolean circuit that,
  given any pure strategy profile $\va\in [\alpha]^n$ (encoded in binary), outputs
  the payoffs of all $n$ players with respect to $\va$ in the game.
The input size of $(n,U)$ is the size of the circuit $U$. 
\end{definition}







\section{A Reduction from Well-Supported to Approximate\\ Nash Equilibria}\label{sec:reduc}

Given an $\alpha$-action, $n$-player game $\calG(n,\alpha,\vu)$ and $\epsilon\in (0,1)$,
  we define a new game $\calG'(sn,\alpha,\vu')$ with $sn$ players, where
$$
s= 2\alpha^2\cdot \left\lceil \ln (n/\epsilon) \right\rceil.
$$
We then show that, given an $\epsilon$-ANE $\vx$ of the new game $\calG'$, 
  one can compute a $(4\alpha \epsilon)$-WSNE $\vy$ of~$\calG$ 
  without making any payoff queries to $\calG$ or $\calG'$. 

For each player $i\in [n]$ in $\calG$, we introduce a group of $s$ players
  in $\calG'$, indexed by $(i,j)$ with $j\in [s]$,
  and use $u'_{i,j}$ to denote the payoff function of player $(i,j)$.
Given any pure strategy profile $\va=(a_{i,j}:i\in [n],j\in [s])$,
  we define the payoff $u'_{i,j}(\va)$ of player $(i,j)$ as follows.
First, for each $i\in [n]$,~let~$\ba_i\in [\alpha]$ denote the \emph{majority} action 
  played by the $i$-th group (players $(i,j)$, $j\in [s]$)
  in the pure strategy profile $\va$
  (break ties by choosing the action with the smallest index).
Write $\vba=(\ba_1,\ldots,\ba_n)$.
Next,~the payoff of player $(i,j)$ under $\va$ is defined as 
  \begin{equation}
    \label{eq:u-def2}
    u'_{i,j}(\va)=u_i(a_{i,j},\vba_{-i}).
  \end{equation}
This completes the definition of $\calG'$.
The lemma below follows directly from the definition.

\begin{lemma}\label{lem:queryefficient}
To answer a payoff query on $\calG'$, it suffices to make $\alpha n$ payoff queries on $\calG$.
\end{lemma}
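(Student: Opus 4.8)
The plan is to observe that the payoff $u'_{i,j}(\va)$ of any player $(i,j)$ in $\calG'$ depends on the queried profile $\va$ only through two quantities: the player's own action $a_{i,j}\in[\alpha]$, and the majority profile $\vba_{-i}$ of the remaining groups. The crucial point is that the majority profile $\vba=(\ba_1,\dots,\ba_n)$ is read off from $\va$ by pure counting, with no access to the payoff oracle of $\calG$ whatsoever. Hence the only data that genuinely has to be fetched from $\calG$ is the collection of real numbers $\{u_i(a,\vba_{-i}) : i\in[n],\ a\in[\alpha]\}$, and once these are in hand every output coordinate of the $\calG'$-query can be filled in.

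First I would compute $\vba$ from $\va$ directly. Then, for each pair $(i,a)\in[n]\times[\alpha]$, I would issue a single payoff query to $\calG$ at the pure strategy profile $(a,\vba_{-i})$, in which player $i$ plays $a$ while every other player $i'$ plays its group majority $\ba_{i'}$. By the definition of a payoff query on $\calG$, this returns the payoffs of all $n$ players at that profile, and in particular its $i$-th coordinate is exactly $u_i(a,\vba_{-i})$. This amounts to $\alpha n$ queries in total. Finally, to answer the original query on $\calG'$ I would set, for each player $(i,j)$, the value $u'_{i,j}(\va)=u_i(a_{i,j},\vba_{-i})$ according to the defining equation~\eqref{eq:u-def2}, using the value already retrieved by the query indexed by $(i,a_{i,j})$. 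Since $a_{i,j}\in[\alpha]$ for every $j\in[s]$, each of the $sn$ required quantities coincides with one of the $\alpha n$ values already obtained, so no further oracle access is needed.

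There is essentially no obstacle here; the statement is a direct consequence of the definition, and the only thing to verify is the bookkeeping that each of the $sn$ output coordinates is covered by one of the $\alpha n$ queries. If one wishes, the count can be tightened (though this is not needed for the stated bound): a single query at the all-majority profile $\vba$ simultaneously yields $u_i(\ba_i,\vba_{-i})$ for every $i\in[n]$ at once, so that only the at most $\alpha-1$ non-majority actions arising within each group require separate queries, giving a total of $1+n(\alpha-1)\le\alpha n$.
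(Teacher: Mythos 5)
Your proof is correct and follows essentially the same approach as the paper: compute the majority profile $\vba$ from $\va$ without oracle access, query $\calG$ at the $\alpha n$ profiles $(a,\vba_{-i})$, and read off each $u'_{i,j}(\va)=u_i(a_{i,j},\vba_{-i})$ from the retrieved values. The paper's proof is a one-line version of this same argument; your extra tightening to $1+n(\alpha-1)$ queries is a nice but unneeded bonus.
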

\begin{proof}
By the definition of $\calG'$, $u'_{i,j}(\va)$'s for all $(i,j)$, $i\in [n]$ and $j\in [s]$, are
  determined by $$\big(u_i(a',\vba_{-i}):i\in [n],a'\in [\alpha]\big),$$ for which
  $\alpha n$ payoff queries on $\calG$ suffice.
\end{proof}

We conclude our reduction by proving the following lemma:

\begin{flushleft}\begin{lemma}
  \label{lem:WSNtoANE}
Given any $\epsilon$-ANE $\vx$ of $\calG'$, 
  one can compute a $(4\alpha\epsilon)$-WSNE $\vy$ of $G$ without making 
  any payoff queries on $\calG$ or $\calG'$.
Moreover, when $\alpha$ is a constant, 
  the computation of $\vy$ from $\vx$ can be done in time polynomial in the number of bits needed 
  in the binary representation of $\vx$ and $1/\epsilon$.
\end{lemma}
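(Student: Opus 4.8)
The plan is to read off a candidate WSNE $\vy$ of $\calG$ directly from the aggregate behavior of the groups in $\vx$, and then argue that the $\eps$-best-response condition for a single group member in $\calG'$, amplified over the $s$ independent members of a group, forces the group's plurality distribution to concentrate on near-best actions of $\calG$.

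First I would record the right object. For each group $k\in[n]$, let $\vy^0_k\in\Delta_\alpha$ be the distribution of its plurality (majority) action $\ba_k$ when the $s$ members $(k,j)$ draw independently from $\vx_{k,1},\dots,\vx_{k,s}$. Since distinct groups are independent under $\vx$, the tuple $\vba_{-i}$ is distributed exactly as a draw from the product profile $\vy^0_{-i}$, and since $u'_{i,j}(\va)=u_i(a_{i,j},\vba_{-i})$ does not involve $\ba_i$, a short computation gives, writing $\bar u_i(a):=u_i(a,\vy^0_{-i})$,
\begin{equation*}
u'_{i,j}(\vx)=\sum_{a\in[\alpha]}\vx_{i,j}(a)\,\bar u_i(a)
\qquad\text{and}\qquad
u'_{i,j}(b,\vx_{-(i,j)})=\bar u_i(b).
\end{equation*}
Feeding this into the $\eps$-ANE inequality for player $(i,j)$ and setting $\ell_i(a):=\max_{a'}\bar u_i(a')-\bar u_i(a)\ge 0$ yields $\sum_a \vx_{i,j}(a)\,\ell_i(a)\le\eps$ for every $j\in[s]$, hence the pointwise bound $\vx_{i,j}(a)\le \eps/\ell_i(a)$. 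This is the quantitative statement that every single group member puts little weight on actions far from best.

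Next comes the main step, the concentration/amplification. Fix a threshold $L=\Theta(\alpha\eps)$ and call $a$ \emph{bad} for group $i$ if $\ell_i(a)>L$. For bad $a$ the previous bound gives $\E[N_a]=\sum_{j}\vx_{i,j}(a)<s\eps/L$, where $N_a$ counts the members of group $i$ choosing $a$. Because a plurality among $\alpha$ actions must be attained at least $s/\alpha$ times, $\vy^0_i(a)=\Pr[\ba_i=a]\le \Pr[N_a\ge s/\alpha]$, and a Chernoff bound on the independent indicators comprising $N_a$ drives this below a truncation threshold $\tau=\Theta(\eps/n)$. I then define $\vy_k$ from $\vy^0_k$ by zeroing every entry below $\tau$ and renormalizing; by construction every action in the support of $\vy_i$ is not bad, i.e.\ $\ell_i(a)\le L$. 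Finally, since $\|\vy_k-\vy^0_k\|_1\le 2\alpha\tau$, a hybrid argument over the other $n-1$ groups (using $u_i\in[0,1]$ and multilinearity) shows $|u_i(a,\vy_{-i})-u_i(a,\vy^0_{-i})|\le n\alpha\tau$ for all $a$, so $\max_{a'}u_i(a',\vy_{-i})-u_i(a,\vy_{-i})\le \ell_i(a)+2n\alpha\tau\le L+2n\alpha\tau$. Choosing $L$ and $\tau$ so that $L+2n\alpha\tau\le 4\alpha\eps$ makes $\vy$ a $(4\alpha\eps)$-WSNE of $\calG$. No payoff queries are used anywhere, and when $\alpha$ is constant each $\vy^0_i$ can be computed exactly by dynamic programming over the $O(s^{\alpha-1})$ possible count vectors $(N_1,\dots,N_\alpha)$ of a group, in time polynomial in the bit length of $\vx$ and in $1/\eps$; this gives the ``moreover'' clause.

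The delicate point is the balancing in the concentration step. Because the WSNE budget $4\alpha\eps$ caps $L$ at a constant multiple of $\alpha\eps$, the Chernoff ratio $(s/\alpha)/\E[N_a]\ge L/(\alpha\eps)$ is only a constant, so the tail decays like $\exp(-\Theta(s/\alpha))$; the point of the extra factor of $\alpha$ in $s=2\alpha^2\ceil{\ln(n/\eps)}$ is to make this exponent $\Theta(\alpha\ln(n/\eps))$, which comfortably exceeds $\ln(n/\tau)\approx\ln(n/\eps)$ even in the worst case $\alpha=2$. I expect the bulk of the real work to be choosing the constants hidden in $L$ and $\tau$ so that the Chernoff inequality and the constraint $L+2n\alpha\tau\le 4\alpha\eps$ hold simultaneously, and carefully verifying the plurality-count inclusion $\{\ba_i=a\}\subseteq\{N_a\ge s/\alpha\}$ together with the smallest-index tie-breaking rule.
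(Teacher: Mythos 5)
Your proposal is correct and follows essentially the same route as the paper's proof: you form the plurality distribution of each group, use the $\eps$-ANE condition on each group member to bound the per-member weight on actions whose payoff gap exceeds $\Theta(\alpha\eps)$, apply a Hoeffding/Chernoff bound to show such actions win the plurality with probability at most $\eps/n$, truncate and renormalize, and control the truncation via total variation distance over the $n-1$ other groups. The only differences are cosmetic (a direct argument instead of the paper's proof by contradiction, and a dynamic-programming computation of the plurality distribution in place of brute-force enumeration over the $\alpha^s=(n/\eps)^{O(1)}$ group profiles), and your constants $L=2\alpha\eps$, $\tau=\eps/n$ balance exactly as required.
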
\end{flushleft}
\begin{proof}
Let $\vx=(\vx_{i,j}:i\in [n],j\in [\alpha])$ be an $\epsilon$-ANE of $\calG'$.
For each group $i$ and action $k\in [\alpha]$, let
\begin{equation}\label{eq:bb}
\bx_{i,k}=\Pr_{\va\sim \vx}\left[\ba_i=k\right].
\end{equation}
Recall that $\ba_i$ is the majority action played by players $(i,j)$, $j\in [s]$, in 
  the pure strategy profile $\va$.
Then by definition, each $\vbx_i=(\bx_{i,1},\ldots,\bx_{i,\alpha})$ is a probability distribution over $[\alpha]$.

Next we define a mixed strategy $\vy=(\vy_1,\ldots,\vy_n)$ of $\calG$, and 
  show that $\vy$ is a $(4\alpha\epsilon)$-WSNE. Let
  \begin{equation}
    \label{eq:prob_truncate2}
    c_{i,k}=\left\{\begin{array}{ll}
      \bx_{i,k} & \textrm{ if } \bx_{i,k}\le {\epsilon}/{n} \\[0.5ex]
      0 & \textrm{ otherwise }
    \end{array}\right.
    \quad\text{and}\quad\ 
    y_{i,k}=\frac{\bx_{i,k}-c_{i,k}}{1-\sum_{j\in[\alpha]}c_{i,k}}.
  \end{equation}
It is clear that $\vy_i=(y_{i,1},\ldots,y_{i,\alpha})$ is a probability distribution over $[\alpha]$.

Now assume for contradiction that $\vy$ is not a $(4\alpha\epsilon)$-WSNE.
Then for some player $i\in [n]$,
  there exists an action $\ell\in [\alpha]$ such that $y_{i,\ell}>0$ but
\begin{equation}\label{eq:1}
\max_{k\in[\alpha]}\hspace{0.05cm}u_i(k,\vy_{-i})>u_i(\ell,\vy_{-i})+4\alpha\epsilon.
\end{equation}
But
  note that, the total variation distance between $\vbx_j$ and $\vy_j$ for each $j\in [n]$ is
  at most $\alpha\epsilon/n$. 
So by coupling and applying
  union bound, we have that
\begin{equation}\label{eq:2}
\big|u_i(k,\vbx_{-i})-u_i(k,\vy_{-i})\big|\le (n-1)\cdot (\alpha\epsilon/n)<\alpha\epsilon,\quad \text{for all
  $k\in[\alpha]$.}
\end{equation}
It then follows from (\ref{eq:1}) and (\ref{eq:2}) that
\begin{equation}\label{eq:3}
\max_{k\in[\alpha]}\hspace{0.05cm}u_i(k,\vbx_{-i})>u_i(\ell,\vbx_{-i})+2\alpha\epsilon.
\end{equation}  
By the definition 
  \eqref{eq:u-def2} of the payoff function $u'_{i,j}$, 
  we have 
\begin{equation}\label{eq:4}
u'_{i,j}(k,\vx_{-i})=u_i(k,\vbx_{-i}),\quad
  \text{for all $j\in[s]$ and $k\in[\alpha]$.}
\end{equation} 
Combining (\ref{eq:3}) and (\ref{eq:4}), we have that for every player $(i,j)$, $j\in [s]$:
$$
\max_{k\in[\alpha]}\hspace{0.05cm}u'_{i,j}(k,\vx_{-i})-u'_{i,j}(\ell,\vx_{-i})\ge
  2\alpha\epsilon.
$$
Since $\vx$ is an $\epsilon$-ANE of $\calG'$, 
  $x_{i,j,\ell}\le 1/(2\alpha)$. By Hoeffding bound
  and plugging in $s=2\alpha^2\cdot \lceil \ln(n/\epsilon)\rceil$,
\begin{align*}
  \bx_{i,\ell} &=\hspace{0.03cm} \Pr\big[\hspace{0.05cm}
  \text{$\ell$ is the majority action among players $(i,j)$, $j\in [s]$}\hspace{0.05cm}\big]\\[0.2ex]
    &\le \hspace{0.03cm}\Pr\big[\hspace{0.05cm}\text{the number of players $(i,j)$ playing $\ell$ 
    is at least $s/\alpha$}\hspace{0.05cm}\big]\le 
          e^{-{s}/({2\alpha^2})}\le \epsilon/n.
\end{align*}
  By~\eqref{eq:prob_truncate2}, this implies that $y_{i,\ell}=0$, which contradicts our assumption
  and proves that $\vy$ is indeed a $(4\alpha\eps)$-WSNE of $\calG$.
From the definition of $\vy$, it is clear that the computation of
  $\vy$ from $\vx$ does not require any payoff queries.
  
For the running time, when $\alpha$ is a constant,
  to compute $\bx_{i,k}$ in (\ref{eq:bb}) one needs 
  to go through
$$
\alpha^s=\alpha^{2\alpha^2\cdot \lceil \ln(n/\epsilon)\rceil}=(n/\epsilon)^{O(1)}
$$
many pure strategy profiles of players $(i,j)$, $j\in [s]$.
Thus $\vy$ can be computed in time polynomial~in the number of bits needed
  in the binary representation of $\vx$ and $1/\eps$.
\end{proof}




\section{Proofs of Theorems \ref{thm:main} and \ref{thm:succinct}}
\label{sec:succinct}

We use the query-efficient reduction given above to prove
  \thref{thm:main} and \thref{thm:succinct}.
  
\begin{proof}[Proof of \thref{thm:main}]
  By \thref{thm:bab14}, there exist two constants $\eps'>0$ and $c'>0$ such that
  $$\text{\textsc{QC}}_{p'}(\textbf{WSNE}(n',\epsilon))=2^{\Omega(n')},\quad\text{where $p'=2^{-c' n'}$.}$$
Let $n=8n'\cdot\left\lceil \ln (n'/\eps')\right\rceil$ and
  $\eps=8\eps'$. It follows from \thref{lem:queryefficient} and \thref{lem:WSNtoANE} that
  $$\text{\textsc{QC}}_{p'}(\textbf{ANE}(n,\epsilon))\ge\text{\textsc{QC}}_{p'}(\textbf{WSNE}(n',\epsilon)) =2^{\Omega(n')}.$$
The theorem then follows from  
  $n'=\Omega(n/\log n)$. 
\end{proof}

\begin{proof}[Proof of \thref{thm:succinct}]
From \thref{lem:WSNtoANE}, it suffices to show that, given any $\alpha$-action
  succinct game $\calG=(n,U)$, one can construct, in polynomial time,
  a Boolean circuit $U'$ that implements the payoff functions of players in $\calG'$.  
This can be done by following the definition of $\calG'$ in the previous section,
  as the payoffs of a pure strategy profile $\va$ in $\calG'$ only
  depends (in a straight-forward fashion) on the payoffs of $\alpha n=O(n)$
  easy-to-compute profiles of $\calG$.
\end{proof}






\section{Conclusion}\label{sec:conclude}

In this paper, we present a simple and efficient reduction from the problem of finding a 
  WSNE to that of finding an ANE in large games with a
  bounded number of actions.~Our results complement the existing study on
  relations between WSNE and ANE. As an application, we obtain a lower bound
   on 
the randomized query complexity of $\epsilon$-ANE for some constant $\eps>0$. 
It would be interesting~to see other applications of our
  reduction in understanding the complexity of Nash equilibria.
It also remains an open problem to remove the $\log n$ factor in the
  exponent of our lower bound, i.e. to show that
  the number of queries needed to reach an $\eps$-ANE is indeed $2^{\Omega(n)}$.
This $\log n$ factor shows up because we simulate each player in the original game
  with $O(\log n)$ players in the new game. 
Is there a more efficient simulation that uses fewer players?


\printbibliography

\end{document}